\newtheorem{lemma}{Lemma}
\newtheorem{theorem}{Theorem}
\begin{document}

\title{Error Analysis for Over-the-Air Federated Learning under Misaligned and Time-Varying Channels}

\author{Xiaoyan Ma\IEEEauthorrefmark{1}, Shahryar Zehtabi\IEEEauthorrefmark{1}, Taejoon Kim\IEEEauthorrefmark{2} and Christopher G. Brinton\IEEEauthorrefmark{1}\\
\IEEEauthorblockA{\IEEEauthorrefmark{1}Electrical and Computer Engineering, Purdue University, IN 47906, USA\\
\IEEEauthorrefmark{2}Electrical, Computer and Energy Engineering, Arizona State University, AZ 85287, USA}
E-mail:\texttt{\{ma946,szehtabi,cgb\}@purdue.edu},
\texttt{taejoonkim@asu.edu}
}
\maketitle
\makeatletter
\def\@IEEEpubidpullup{3\baselineskip}  
\makeatother

\IEEEpubid{%
  \begin{minipage}[t]{\columnwidth}\raggedright
 \vspace{-2.5 em}
  \fontsize{7}{7.2}\selectfont     
    \setlength{\parskip}{0pt}
    \setlength{\baselineskip}{7.2pt} 
    This work was supported in part by the National Science Foundation (NSF) under grants CNS-2146171 and ITE-2326898, and by the Office of Naval Research (ONR) under grant N00014-21-1-2472.
  \end{minipage}\hspace{\columnsep}\makebox[\columnwidth]{}%
}

\begin{abstract}
This paper investigates an OFDM-based over-the-air federated learning (OTA-FL) system, where multiple mobile devices, e.g., unmanned aerial vehicles (UAVs), transmit local machine learning (ML) models to a central parameter server (PS) for global model aggregation. The high mobility of local devices results in imperfect channel estimation, leading to a misalignment problem, i.e., the model parameters transmitted from different local devices do not arrive at the central PS simultaneously. Moreover, the mobility introduces time-varying uploading channels, which further complicates the aggregation process. All these factors collectively cause distortions in the OTA-FL training process which are underexplored. To quantify these effects, we first derive a closed-form expression for a single-round global model update in terms of these channel imperfections. We then extend our analysis to capture multiple rounds of global updates, yielding a bound on the accumulated error in OTA-FL. We validate our theoretical results via extensive numerical simulations, which corroborate our derived analysis.
\end{abstract}
\IEEEpubidadjcol

\begin{IEEEkeywords}
Over-the-air Federated Learning, Misalignment, Time-varying Channels, OFDM, Error Analysis
\end{IEEEkeywords}

\vspace{-0.08in}
\section{Introduction}
\IEEEPARstart{W}{ith} the exponential growth of wireless Internet of Things (IoT) devices, massive volumes of data are being generated and stored at edge devices \cite{brinton2025key}. Traditional machine learning (ML) approaches rely on centralized training, where all local devices upload their raw data to a central server, which is often impractical due to limited communication resources and privacy concerns. To address these challenges, federated learning (FL) has emerged as a decentralized ML paradigm that enables local devices to collaboratively train a global model without sharing their raw data, thereby preserving privacy and reducing communication overhead \cite{10706930}.

To enhance the scalability of FL with respect to the number of participating devices, over-the-air computation (AirComp) has been proposed for uplink local model transmission \cite{10538293}. By leveraging the inherent signal superposition property of the wireless multiple access channel, the model parameters transmitted from different edge devices can be organically aggregated. In \cite{9272666}, it has been verified that AirComp can provide strong noise tolerance and reduced latency compared to conventional orthogonal multiple access (OMA) methods. 

To support reliable over-the-air (OTA) aggregation in broadband systems, orthogonal frequency-division multiplexing (OFDM) has been widely adopted as a physical-layer technique for AirComp \cite{10215506}. As a multi-carrier modulation scheme, OFDM combats frequency-selective fading by dividing a high-data-rate signal into multiple orthogonal subcarriers, resulting in improved aggregated signal quality \cite{wang2024digital}.

Despite these advantages offered by OFDM-based over-the-air FL (OTA-FL), communication errors caused by imperfect channel information can severely degrade the learning performance. In \cite{10261509}, the authors investigate the impacts of channel state information (CSI) uncertainty on an AirComp-enabled FL system. The results show that FL results deteriorate with channel uncertainty. 
 In \cite{9780892}, the authors utilize statistical CSI models and investigate the effects of these imperfect CSIs on OTA-FL convergence properties. Furthermore, the authors in \cite{10272265} design a non-coherent receiver
 at the edge server to reduce the channel estimation overhead. They theoretically analyze the mean squared error (MSE) performance of their proposed scheme and the convergence rate of their FL process. 

A key gap in these existing works is their focus on stationary scenarios. In practical OTA-FL systems, local devices may have high mobility, e.g., in unmanned aerial vehicle (UAV)-assisted or vehicle-assisted FL systems. Mobility adds dynamics to imperfect CSI estimations, which leads to a misalignment problem, i.e.,
the model parameters transmitted from different local devices
cannot arrive at the central parameter server (PS) simultaneously. Moreover, mobility introduces time-varying uploading channels, which further complicates the aggregation process at the PS.

\textbf{Contributions.} To address these challenges, we investigate the effects of misalignment and time variation on the performance of OTA-FL systems. The main contributions of our work can be summarized as follows:

\begin{itemize}
    \item We consider OTA-FL systems in which the uploading channels between local devices and the PS are subject to mobility-induced impairments. We develop a comprehensive channel model to capture these dynamics and integrate it into the FL training process. 
    \item By analyzing the aggregated signal received at the central PS, we first bound the error for a single-round global model update in terms of the channel imperfections. We then extend our analysis to the multi-round case and derive closed form expressions of the accumulated error of the OTA-FL training process.

    \item We provide simulation results to compare the analytical error bound we derived with the real accumulated error across multi-round OTA-FL. Our simulations confirm the accuracy of our derived theoretical analysis.
\end{itemize}


\vspace{0.035in}
\section{System Model} \label{sec:sysmodel}
In this section, we introduce our OTA-FL model. As shown in Fig. \ref{system_model}, each device collects data for local model training and uploads their respective models to the central PS for model aggregation. The mobility of local devices causes imperfect channel estimation, resulting in misaligned and time-varying channels, which ultimately distorts the OTA-FL outcome.

\subsection{OFDM-based Signal Transmission Model} \label{ssec:ofdm}
In FL, the training process is divided into $T$ global aggregation iterations $t = 1,...,T$. For the $t$-th global model aggregation, we assume that each local device $k=1,2,...,K$ has an $M$-dimensional local parameter vector $\boldsymbol{\theta}_k(t)$ expressed as
\begin{equation}
  \boldsymbol{\theta}_k(t) 
= 
\bigl[
   \theta_{k,1}(t),\,
  \theta_{k,2}(t),\,
  \dots,\,
 \theta_{k,M}(t)
\bigr]^T \in \mathbb{R}^{M \times 1}.
\end{equation}
In our paper, OFDM transmission is used to combat frequency-selective facing. We assume that there are $F_{sc}$ subcarriers. We thus need $D = \lceil\frac{M}{2 \cdot F_{\mathrm{sc}}}\rceil $ OFDM symbols to carry all these $M$ parameters. To form $D$ OFDM symbols, we divide the original model parameter set into two parts: 
\begin{equation}\label{Real_part}
     \boldsymbol {\theta}_{k}^{d,re}(t)
=
\bigl[
   \theta_{k,2(d-1)F_{\mathrm{sc}} + 1}(t),\,
    \dots,\,
  \theta_{k,(2d-1)\,F_{\mathrm{sc}}}(t)
\bigr]^{T} ,
\end{equation}
\begin{equation} \label{Image_part}
   \boldsymbol{ \theta}_k^{d,im}(t)
=
\bigl[
  \theta_{k,(2d-1)\,F_{\mathrm{sc}} + 1}(t),\,
  \dots,\,
  \theta_{k,2\,d\,F_{\mathrm{sc}}}(t)
\bigr]^T,
\end{equation}
where $d$ is the symbol index. The $d$-th OFDM symbol on all $F_{sc}$ subcarriers can thus be expressed as 
\begin{equation} \label{OFDM_Symbol}
  \boldsymbol{\theta}_k^d(t)  =
 \boldsymbol{\theta}_k^{d,re}(t)
\;+\;
j\,\boldsymbol{ \theta}_k^{d,im}(t) .
\end{equation}

The components of the OFDM symbol in Eq. \eqref{OFDM_Symbol} can be expressed as
\begin{equation}\label{original_transmitted_signal}
    \theta_{k,i}^{d}(t)
=
 \theta_{k,\,2(d-1)\,F_{\mathrm{sc}} + i}(t)
\;+\;
j\,\theta_{k,\,(2d-1)\,F_{\mathrm{sc}} + i}(t),
\end{equation}
with $i=1,2,...,F_{sc}$. We then convert Eq. \eqref{original_transmitted_signal} from frequency domain to time domain for actual OTA transmission through an $F_{sc}$-point IDFT, which is given as
\begin{equation} \label{time-domain-signal}
    x_{k,i}^d(t)
=
\frac{1}{F_{\mathrm{sc}}}
\sum_{f=1}^{F_{\mathrm{sc}}}
 \theta_{k,f}^d(t)\,
e^{\,j\,2\pi\,\frac{fi}{F_{\mathrm{sc}}}},
\quad
i = 1,\dots,F_{\mathrm{sc}}. 
\end{equation}

Subsequently, we add a cyclic prefic (CP) of length $F_{cp}$ at the beginning of each OFDM symbol, making the final OFDM symbol from Eq. \eqref{time-domain-signal} expressed as
\begin{equation}
\begin{aligned}
    &\resizebox{\linewidth}{!}{$\boldsymbol{ x}_{k}^{d}(t)
\!\!=\!\! \!\Bigl[
   x_{k,F_{\mathrm{sc}} - F_{\mathrm{cp}} + 1}^{d}(t),\;
  \dots,\;
   x_{k,F_{\mathrm{sc}}}^{d}(t),\;
   x_{k,1}^{d}(t),\;
  \dots,\;
   x_{k,F_{\mathrm{sc}}}^{d}(t)
\Bigr]^T$}, 
\end{aligned}  
\end{equation}
where $\boldsymbol{ x}_{k}^{d}(t) \in \mathbb{C}^{(F_{cp}+F_{sc}) \times 1}$.
With the above definition, the final transmitted signal set of each local device can be expressed as
\vspace{-0.15cm}
\begin{equation}
    \boldsymbol{x}_{k}(t) 
=
\begin{bmatrix}
 \boldsymbol{x}_{k}^{1}(t), \,\,
 \boldsymbol{x}_{k}^{2}(t), \,\,
\cdots \,\,,
 \boldsymbol{x}_{k}^{D}(t)
\end{bmatrix}^T.
\end{equation}

\vspace{0.05in}
\subsection{Misaligned Time-Varying Channel Model } \label{ssec:channel}


\begin{figure}[t!]
\centering
\includegraphics[width=0.75\linewidth]{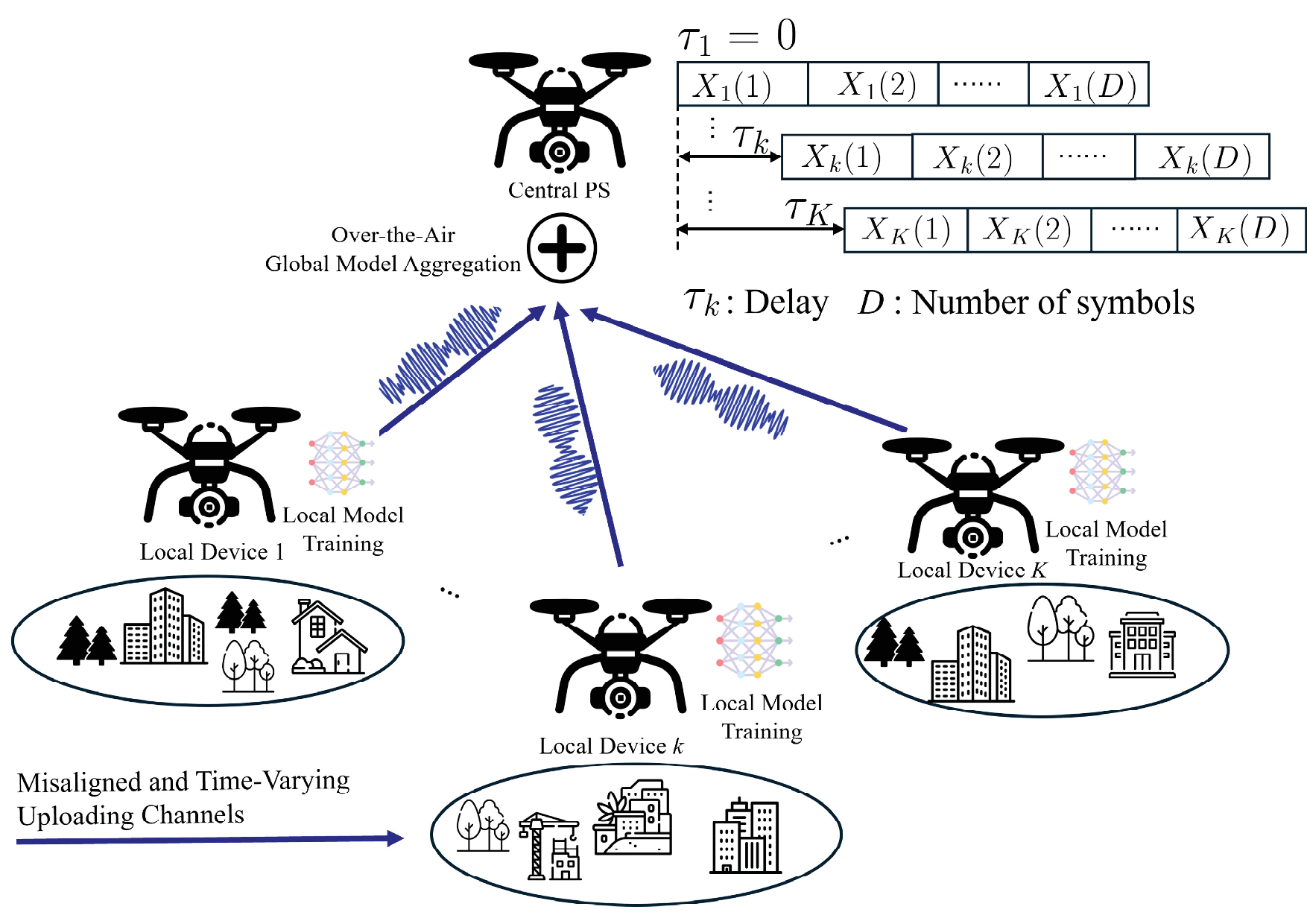}
\caption{Considered OTA-FL system model where each mobile device uploads the respective model to PS for global model aggregation.}
\label{system_model}
\end{figure}

 \begin{figure*}[hb] 

 	\centering
 	\begin{equation} \label{EXPENDED}	\tag{15}
         \begin{aligned}
 		Y_{n,u}^d(t)
&=
\sum_{i=1}^{F_{\mathrm{sc}}}
y_{n,i}^d(t)\,
e^{-\,j\,\frac{2\pi\,u\,i}{F_{\mathrm{sc}}}}
= \frac{1}{F_{\mathrm{sc}}}
\sum_{v=1}^{F_{\mathrm{sc}}}
\sum_{k=1}^{K}
 \theta_{k,v}(t)
\left(
  \sum_{\ell=1}^{L}
  \sum_{i=1}^{F_{\mathrm{sc}}}
  h_{k,n,\ell,i}(t)\,
  e^{-\,j\frac{2\pi (u-v)\,i}{F_{\mathrm{sc}}}}
  \,e^{-\,j\frac{2\pi\,\left(\tau_{kn,l}(t) + \tau_{kn,l}^{d}(t)\right)}{F_{\mathrm{sc}}}}
\right) + Z_{n,u}^d(t), \hspace{-1cm}\\
& =
\sum_{k=1}^K H_{k,n,u,u}^d(t)\, \theta_{k,u}^d(t)
+
\sum_{\substack{v=1 \\ v \neq u}}^{F_{\mathrm{sc}}}
\sum_{k=1}^K
H_{k,n,u,v}^d(t)\, \theta_{k,v}^e(t)
+
Z_{n,u}^d(t). 
\end{aligned}
 	\end{equation}
 \end{figure*}
At the beginning of each global round $t$, the channel from edge local device $k$ to the $n$-th antenna at the central PS can be expressed as \cite{10018930}
\vspace{-0.25cm}
\begin{equation} \label{Chnnel}
h_{k, n}^{d}(t)
= \sum_{l=1}^{L}
    h_{kn,l}^{d}(t)\,\delta\left(\lambda - \tau_{kn,l}(t) - \tau_{kn, l}^{d}(t)\right),
\end{equation}
where $L$ is the number of paths. In Eq. \eqref{Chnnel}, $h_{kn,l}^{d}(t) $ is the channel gain of the $l$-th path. Note that there are two delay parameters in the channel model: (i) $\tau_{kn,l}(t)$ represents the misalignment delay for the $l$-th path from local device $k$ to the $n$-th antenna at the central PS. This is caused by the transmitted signals from different local devices not arriving simultaneously at the central PS due to imperfect channel estimation and imperfect synchronization in the system. (ii) $\tau_{kn, l}^{d}(t)$ is the time delay for OFDM symbol $d$ caused by the time varying channels. We further introduce a parameter $a$ to model this time-varying characteristic of the channels as follows \cite{10018930}:
\begin{equation} \label{time_varying_channel}
    h_{kn,l}^{d}(t)
= \sqrt{1 - \alpha^2}\,h_{kn,l}^{d-1}(t)
+ \alpha c,
\end{equation}
where $c$ is a complex random variable with zero mean
and variance $\sigma_c^2$. We observe that $\alpha$ in Eq. \eqref{time_varying_channel} represents the dependence between two continuous OFDM symbols, and larger $\alpha$ corresponds to faster channel variations.

Then, at the $n$-th receiving antenna, after removing the CP of length $F_{cp}$, the $i$-th entry of the received signal can be expressed as 
\vspace{-0.15cm}
\begin{equation}
    y_{n,i}^d(t)
\!\!=\!\!\!\!\!
\sum_{K=1}^{K}\!
\sum_{\ell=1}^{L}
h_{knl,i}^d(t)\, x^d_{k,\,i -\tau_{kn,l}(t)\! - \!\tau_{kn, l}^{m}(t)}(t)
\!+\!
z_{n,i}^d(t), 
\end{equation}
where $i=1,2,...,F_{sc}$. Finally, through an $F_{sc}$-point DFT, we recover the original signal as
\vspace{-0.1cm}
\begin{equation}\label{recovered_signal}
    Y_{n,u}^d(t)
=
\sum_{i=1}^{F_{\mathrm{sc}}}
y_{n,i}^d(t)\,
e^{-\,j\,\frac{2\pi\,u\,i}{F_{\mathrm{sc}}}} , \,\,\, u=1,2,...,F_{sc}.
\end{equation}

Combining the frequency domain received signals from all $N$ receiving antennas at the parameter server and applying a combining vector, we obtain
\vspace{-0.1in}
\begin{equation}
\hat{Y}_{u}^d(t)=\boldsymbol{w}_u^d(t)\left[
Y_{1,u}^d(t),
Y_{2,u}^d(t),
\cdots,
Y_{N,u}^d(t)
\right]^T,
\end{equation}
where $\boldsymbol{w}_u^d(t)\in \mathbb{C}^{1 \times N}$ is the combining vector on subcarrier $u$ for global round time $t$.
Then the estimated signal at the parameter server can be expressed as 
\begin{equation}
  \hat{\theta}_{2(d-1)F_{\mathrm{sc}} + u}(t)
\!\!=\!\!
\frac{Re({\hat{Y}_{u}^d(t)})}{K} ,\,\,
    \hat{\theta}_{(2d-1)F_{\mathrm{sc}} + u}(t)
\!\!=\!\!
\frac{Im({\hat{Y}_{u}^d(t)})}{K}.
\end{equation}

Compared with the original transmitted signal in Eq. \eqref{original_transmitted_signal}, we can see that the index $k$ for the local device disappears since they are wirelessly combined over the channel.

\subsection{Combining Vector Design at the Central PS} \label{ssec:combination}
For the combining vector design, we need to expand the recovered signal in Eq. \eqref{recovered_signal} to Eq. \eqref{EXPENDED} as shown at the bottom of this page. In Eq. \eqref{EXPENDED}, $H_{k,n,u,v}^d(t)$ represents the effective frequency domain channel from local device $k$ on subcarrier $u$ to the $n$-th antenna at central PS on subcarrier $v$ and can be expressed as
\setcounter{equation}{15}
\vspace{-0.2cm}
\begin{equation} 
\resizebox{\linewidth}{!}{$H_{k,n,u,v}^d(t) \!\!
=\!\!
\frac{1}{F_{\mathrm{sc}}}\!\!
\sum_{\ell=1}^{L}\!
\sum_{i=1}^{F_{\mathrm{sc}}}
h_{k,n,\ell,i}^d(t)
e^{-j\frac{2\pi\bigl(i(u-v) + \tau_{kn,l}(t) + \tau_{kn,l}^{d}(t)\bigr)}{F_{\mathrm{sc}}}}.$}
\end{equation}
Combining all received frequency domain signals from all the $N$ receiving antennas through maximum ratio combining vector, we obtain
\vspace{-0.1in}
\begin{equation} \label{signal_after_combining}
\hat{Y}_u^d(t)
=
\frac{1}{N}
\sum_{n=1}^N
\Biggl(
  \sum_{k=1}^K
  H_{k,n,u,u}^d(t)
\Biggr)^*
\,Y_{n,u}^d(t).
\end{equation}

Here, $
  \sum_{k=1}^K
  H_{k,n,u,u}^d(t)
$ represents all the channels from all local devices to the $n$-th antenna on subcarrier $u$ at the central PS. Consequently, by multiplying the conjugate of $
  \sum_{k=1}^K
  H_{k,n,u,u}^d(t)
$, we achieve maximum ratio combining at the central PS.
\vspace{-0.2cm}
\section{Error Analysis for Global Model Update} \label{sec:analysis}
In this section, we provide detailed analysis of the model aggregation error. We first investigate a single-round model update, then extend to multiple rounds case to evaluate the accumulated error for final OTA-FL outcome.

\subsection{Error Analysis for Single-round Global Model Update} \label{ssec:errorsingle}
To do the error analysis, we need to expand the estimated signal in Eq. \eqref{signal_after_combining}, which can be expressed in detail as
\vspace{-0.1in}

\begin{small}
\begin{align} \label{error0}
& \hat{Y}_u^d(t) =
\underbrace{\frac{1}{N}
\sum_{n=1}^N
\sum_{k=1}^K
\bigl\lvert H^d_{k,n,u,u}(t) \bigr\rvert^2 \, \theta^d_{k,u}(t)}_{\text{(a) desired FL  parameter}}
\nonumber \\
&+
\underbrace{\frac{1}{N}
\sum_{n=1}^N
\sum_{k=1}^K
\sum_{\substack{k'=1 \\ k' \neq k}}^K
\bigl(H^d_{k,n,u,u}(t)\bigr)^*
\,H^d_{k',n,u,u}(t)\,
 \theta^d_{k,u}(t)\, \theta^d_{k',u}(t)}_{\text{(b) interference among same subcarrier}}
\nonumber \\
&+
\underbrace{\frac{1}{N}
\sum_{n=1}^N
\sum_{\substack{v=1 \\ v \neq u}}^{F_{\mathrm{sc}}}
\sum_{k=1}^K
\bigl(H^d_{k,n,u,u}(t)\bigr)^*
\,H^d_{k,n,u,v}(t)\,
 \theta^d_{k,u}(t)\, \theta^d_{k,v}(t)}_{\text{(c) ICI term where }v \neq u}
\nonumber \\
&+
\underbrace{\frac{1}{N}
\sum_{n=1}^N
\sum_{k=1}^K
\bigl(H^d_{k,n,u,u}(t)\bigr)^*
\,Z^d_{n,u}(t)}_{\text{(d) noise term}}.
\end{align}
\end{small}
\vspace{-0.1in}

Here $(a)$ is the desired FL parameter generated through OTA aggregation, $(b)$ is the interference among same subcarrier $u$, $(c)$ is the inter-carrier interference (ICI) betweem subcarriers $u$ and $v$, and $(d)$ is the noise term. We want to give an exact expression of the error terms $(b)$, $(c)$ and $(d)$ in Eq. \eqref{error0}, thereby we analyze them separately. 

\subsubsection{\textbf{Interference among same subcarrier}} \label{sssec:error_b}
First, for (b) in Eq. \eqref{error0}, i.e., interference among same subcarrier, we introduce a parameter
\begin{equation}
    \kappa^{d}_{j,u}(t) = \frac{1}{N} \sum_{n=1}^{N} \sum_{\substack{k=1 \\ k \ne j}}^{K} \left( H^{d}_{k,n,u,u}(t) \right)^{*} H^{d}_{j,n,u,u}(t).
\end{equation}

The key idea behind defining $\kappa^{d}_{j,u}(t)$ is to enable calculation of the correlation between two channels, so that we can quantify the interference from the other users to user $j$ on subcarrier $u$ for the $d$-th symbol in global round time $t$. Then we need to calculate the expectation of $| \kappa^{d}_{j,u}(t) |^2$, i.e., $\mathbb{E} [ | \kappa^{d}_{j,u}(t) |^2 ] $. To achieve this, we further define 
\vspace{-0.08in}
\begin{equation}
    X_{k,j,n,u}^d = \left( H^{d}_{k,n,u,u}(t) \right)^{*} H^{d}_{j,n,u,u}(t) ,
\end{equation}
 then, we have 
 \vspace{-0.2in}
 
 \begin{small}
\begin{equation}
    \begin{aligned}
        & \mathbb{E} \left[ \left| \kappa^{d}_{j,u}(t) \right|^2 \right]
        = \mathbb{E} \Big[ \big| (1/N) \sum_{n=1}^{N} \sum_{k=1, k \ne j}^{K} X_{k,j,n,u}^d \big|^2 \Big] \\
        &\leq \frac{1}{N^2} N (K - 1) \mathbb{E} \left[ |X_{k,j,n,u}^d|^2 \right]
        = \frac{K - 1}{N} \mathbb{E} \left[ |X_{k,j,n,u}^d|^2 \right] .
    \end{aligned}
\end{equation}
 \end{small}
 \vspace{-0.05in}
 
To give an exact expression of $\mathbb{E} [ |X_{k,j,n,u}^d|^2 ] $, we need the following statistical assumptions:
\begin{itemize}
    \item Each channel coefficient $h_{kn,l}^{d}(t)$ is independent and follows the same Rayleigh fading distribution, i.e.,   
    $h_{kn,l}^{d}(t) \sim \mathcal{CN}(0, \sigma_h^2) $;
    
    \item We use $\tilde{\tau}_{kn,l}^d(t)=\tau_{kn,l}(t) + \tau_{kn,l}^{d}(t)$ to represent the overall delay for local device $k$ at the $d$-th OFDM symbol, then the delay difference between local device $k$ on path $l$ and local device $j$ on path $l'$ for the $d$-th OFDM symbol can be expressed as  $\dot{\tau}^{d}_{k,j,l,l'}(t)=\tilde{\tau}_{kn,l}^d(t)- \tilde{\tau}^{d}_{jn,l'}(t)$. We assume that for different user pair $(k,j)$, $\dot{\tau}^{d}_{k,j,l,l'}(t)$ are independent, but they follow the same Gaussian distribution with mean $\mu_{\tau}$ and variance $\sigma_{\tau}^2$.
\end{itemize}
Based on the above statistical assumptions, we have the following lemma regarding to the upper bound of the interference term among the same subcarrier, i.e., term $(b)$ of Eq. \eqref{error0}:
\begin{lemma} \label{lemma:interference}
The upper bound of the interference term among the same subcarrier can be expressed as
\begin{equation} \label{eqn:error_b}
    \mathbb{E} \left[\! \left| \kappa^{d}_{j,u}(t) \right|^2 \!\right] \!\!= \!\!\frac{K - 1}{N}  L^2 \sigma_h^4 \frac{1}{\sqrt{1 + \frac{4 \pi^2 \sigma_\tau^2}{F_{sc}^2}}}  \frac{1}{1 + \left( \frac{2 \pi \mu_\tau}{F_{sc}} \right)^2}.
\end{equation}
\end{lemma}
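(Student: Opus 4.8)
By the reduction already carried out, it suffices to evaluate the fourth-order moment $\mathbb{E}\big[|X_{k,j,n,u}^d|^2\big]$ for a fixed triple $(k,j,n)$ with $k\neq j$ and then multiply by $(K-1)/N$; I would also note that the norm inequality used just before the lemma is in fact tight in expectation, since the $X_{k,j,n,u}^d$ are uncorrelated over distinct $(k,n)$ — the effective channels $H^d_{k,n,u,u}(t)$ are zero-mean and independent across devices and antennas. I would begin by writing $|X_{k,j,n,u}^d|^2 = |H^d_{k,n,u,u}(t)|^2\,|H^d_{j,n,u,u}(t)|^2$ and substituting the $u=v$ specialisation of the effective-channel expression, in which $H^d_{k,n,u,u}(t)$ is, up to a fixed normalisation, a sum over the $L$ paths of the tap gains $h^d_{kn,\ell}(t)$ each weighted by the phase $e^{-j\frac{2\pi}{F_{sc}}\tilde\tau^d_{kn,\ell}(t)}$.

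Next I would expand the product of the two magnitude-squared factors into a quadruple sum over path indices $(\ell_1,\ell_2)$ for device $k$ and $(\ell_1',\ell_2')$ for device $j$; each resulting monomial is a product of four tap gains times a complex exponential whose argument is $-\frac{2\pi}{F_{sc}}\big(\tilde\tau^d_{kn,\ell_1}-\tilde\tau^d_{kn,\ell_2}+\tilde\tau^d_{jn,\ell_1'}-\tilde\tau^d_{jn,\ell_2'}\big)$, which regroups into delay differences of the type $\dot\tau^d_{k,j,\ell,\ell'}(t)$. Because the tap gains are independent across devices and paths and independent of the delays, the expectation factors into a channel-gain part and a delay part that can be handled separately.

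For the gain part I would use the circularly-symmetric Rayleigh assumption $h^d_{kn,\ell}(t)\sim\mathcal{CN}(0,\sigma_h^2)$: its fourth-order moment retains only the admissible path-index matchings, each contributing $\sigma_h^2\cdot\sigma_h^2=\sigma_h^4$, and summing over the $L\times L$ admissible $(\ell,\ell')$ pairs produces the prefactor $L^2\sigma_h^4$. For the delay part I would invoke the assumption that the surviving delay differences are i.i.d.\ $\mathcal{N}(\mu_\tau,\sigma_\tau^2)$ across user pairs and evaluate the expectation of the residual phases through the Gaussian characteristic function $\mathbb{E}\big[e^{-j\frac{2\pi}{F_{sc}}\dot\tau^d_{k,j,\ell,\ell'}}\big]=e^{-j\frac{2\pi\mu_\tau}{F_{sc}}}\,e^{-\frac{2\pi^2\sigma_\tau^2}{F_{sc}^2}}$; separating the variance-induced attenuation from the mean-induced phase rotation, and applying the first-order simplifications $e^{-x}\approx(1+2x)^{-1/2}$ and $\cos^2 a\approx(1+a^2)^{-1}$ to the residual magnitude and phase, yields the two rational factors $\big(1+4\pi^2\sigma_\tau^2/F_{sc}^2\big)^{-1/2}$ and $\big(1+(2\pi\mu_\tau/F_{sc})^2\big)^{-1}$ appearing in \eqref{eqn:error_b}. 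Collecting $L^2\sigma_h^4$, the two delay factors, and the $(K-1)/N$ prefactor completes the argument.

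The step I expect to be delicate is the bookkeeping of the quadruple path sum: one must pin down exactly which path-index matchings the fourth-order gain moment retains, because this is precisely what determines which delay differences $\dot\tau^d_{k,j,\ell,\ell'}$ survive in the exponent — and hence whether any non-trivial $\tau$-dependence remains rather than every phase cancelling. Getting that matching right, and then reducing the surviving characteristic-function terms cleanly to the stated rational form, is the crux; the remaining manipulations — the already-used norm/independence step and the arithmetic that assembles $\sigma_h^4$, $L^2$, and the two delay factors — are routine.
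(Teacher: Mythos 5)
Your decomposition---pulling out the $(K-1)/N$ prefactor (your remark that this step is tight because the $X^d_{k,j,n,u}$ are zero-mean and uncorrelated across distinct $(k,n)$ is correct and explains the equality in the lemma), computing the gain part to get $L^2\sigma_h^4$, and treating the delay part separately---matches the skeleton of the paper's argument. But the step you yourself flag as delicate is where the proposal breaks, and it breaks in exactly the way you feared. Writing $|X^d_{k,j,n,u}|^2=|H^d_{k,n,u,u}(t)|^2\,|H^d_{j,n,u,u}(t)|^2$ and expanding each factor over paths, the only matchings that survive the fourth-order moment of independent zero-mean taps are the diagonal ones $\ell_1=\ell_2$ within device $k$ and $\ell_1'=\ell_2'$ within device $j$ (all cross-device pairings vanish by independence and zero mean). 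For those matchings every phase difference $\tilde\tau^d_{kn,\ell_1}-\tilde\tau^d_{kn,\ell_2}$ is identically zero, so no $\tau$-dependence survives: the rigorous expansion yields $\mathbb{E}\bigl[|H_k|^2\bigr]\mathbb{E}\bigl[|H_j|^2\bigr]=(L\sigma_h^2)^2$ and nothing else, and the cross-device differences $\dot\tau^d_{k,j,\ell,\ell'}$ never appear in $|H_k|^2|H_j|^2$ at all. The paper does not derive the delay factor from this expansion either; it \emph{postulates} an attenuation factor $\eta_b=\mathrm{sinc}^2\bigl(2\pi\dot\tau^d_{k,j,\ell,\ell'}(t)/F_{sc}\bigr)$ as a standard model for OFDM timing errors (citing the timing-offset literature), bounds $\mathrm{sinc}^2(x)\le 1/(1+(\pi x)^2)$, and then evaluates the expectation of the resulting Lorentzian under $\dot\tau\sim\mathcal{N}(\mu_\tau,\sigma_\tau^2)$ to obtain the two rational factors. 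Without importing that modeling assumption, your expansion cannot produce them.

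Your treatment of the delay part is also a genuinely different, and weaker, route: you evaluate the Gaussian characteristic function $\mathbb{E}\bigl[e^{-j2\pi\dot\tau/F_{sc}}\bigr]$ and then apply $e^{-x}\approx(1+2x)^{-1/2}$ and $\cos^2 a\approx(1+a^2)^{-1}$. These are first-order approximations, not inequalities valid for all parameter values (for instance $\cos^2 a\le(1+a^2)^{-1}$ fails for large $a$), so at best you recover the stated expression as an approximation rather than as the upper bound the lemma asserts, and the pure phase-rotation model is not equivalent to the sinc-type power attenuation the paper invokes. To align with the paper you should (i) state explicitly that the pairwise interference power is \emph{modeled} as $(L\sigma_h^2)^2\,\mathbb{E}\bigl[\mathrm{sinc}^2(2\pi\dot\tau^d_{k,j,\ell,\ell'}(t)/F_{sc})\bigr]$, and (ii) replace the Taylor steps by the bound $\mathrm{sinc}^2(x)\le(1+(\pi x)^2)^{-1}$ followed by the Gaussian expectation of the Lorentzian.
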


\begin{proof}
According to Parseval's Theorem, if the time-domain channel has variance $\sigma_h^2$, then each frequency-domain subcarrier has variance $L\sigma_h^2$ after DFT. Based on this, we have
\vspace{-0.12in}

\begin{small} 
\begin{equation} \label{b_analysis}
\begin{aligned}
 & \mathbb{E} \left[ |X_{k,j,n,u}^d|^2 \right] 
 = \mathbb{E} \left[\left |\left( H^{d}_{k,n,u,u}(t) \right)^{*} H^{d}_{j,n,u,u}(t) \right|^2 \right]  \\
&\resizebox{\linewidth}{!}{$\overset{(1)}{=} \left( L \sigma_h^2 \right)^2 \mathbb{E} \left[\eta_b(\mu_{\tau},\sigma^2_{\tau})\right] \overset{(2)}{\leq} L^2 \sigma_h^4 
\frac{1}{\sqrt{1 + \frac{4 \pi^2 \sigma_\tau^2}{F_{sc}^2}}}  \frac{1}{1 + \left( \frac{2 \pi \mu_\tau}{F_{sc}} \right)^2}.$} \hspace{-1cm}
\end{aligned}
\end{equation}
\end{small} 
\vspace{-0.1in}

In Eq. \eqref{b_analysis}, $(1)$ holds due to the fact that the channel coefficients $h_{kn,l}^{d}(t)$ in Eq. \eqref{Chnnel} are independent and follow Guassian random distribution with variance $\sigma_h^2$, and we define $\eta_b(\mu_{\tau},\sigma^2_{\tau})=\text{sinc}^2 
(
(2\pi\dot{\tau}^{d}_{k,j,l,l^{'}}(t)) / F_{sc}
)
$ to approximate the timing errors in OFDM systems \cite{1599595}, and $(2)$ holds due to the fact that $\text{sinc}^2(x) \leq \frac{1}{1+(\pi x)^2}$, then we have
\vspace{-0.15in}

\begin{small} 
\begin{equation}
    \mathbb{E} \left[ \text{sinc}^2\left( \frac{2\pi\dot{\tau}^{d}_{k,j,l,l^{'}}(t)}{F_{sc}} \right) \!\right] 
\!\! \leq \!
\mathbb{E}\!\!\left[\!\! \left(\!\! 1 \!+\! \left( \frac{2\pi\dot{\tau}^{d}_{k,j,l,l^{'}}(t)}{F_{sc}}\! \right)^2 \right)^{-1}\! \right],
\end{equation}
\end{small}
\vspace{-0.2in}

Since $\dot{\tau}^{d}_{k,j,l,l^{'}}(t) \backsim \mathcal{CN}(\mu_{\tau}, \sigma_{\tau}^2) $, we further have
\vspace{-0.08in}

\begin{small} 
\begin{equation}
    \mathbb{E}\!\! \left[ \!\! \left( 1 \!\! +\!\!  \left( \frac{2\pi\dot{\tau}^{d}_{k,j,l,l^{'}}(t)}{F_{sc}} \right)^2 \!\right)^{-1} \right]\!\! =\!\! \frac{1}{\sqrt{1 + \frac{4 \pi^2 \sigma_\tau^2}{F_{sc}^2}}}  \frac{1}{1 + \left( \frac{2 \pi \mu_\tau}{F_{sc}} \right)^2}.
\end{equation}
\end{small}
\vspace{-0.1in}

Combining all the above analyses, we can obtain the upper bound of term $(b)$ as shown in Lemma \ref{lemma:interference}.
\end{proof}

\subsubsection{\textbf{Inter-carrier interference}} \label{sssec:error_c}
For the ICI term shown in part $(c)$ of Eq. \eqref{error0}, we also define a parameter $\zeta_{u,v}^d(t)$ to measure the interference from other subcarriers on subcarrier $u$ as shown below:
\vspace{-0.12in}

\begin{small}
\begin{equation} \label{ICI}
\zeta_{u}^d(t) = \frac{1}{N} \sum_{n=1}^{N} \sum_{\substack{v=1 \\ v \ne u}}^{F_{sc}} \sum_{k=1}^{K} \left( H_{kn,uu}^d(t) \right)^* H_{kn,uv}^d(t).
\end{equation}
\end{small}
\vspace{-0.12in}

Note that for OFDM systems, most of the ICI leakage comes from neighboring subcarriers, and the leakage decays rapidly as the subcarriers move further away from subcarrier $u$. Based on this observation, we consider a window of length $2q$ around the subcarrier $u$. The value of $q$ is related to time-varying channel parameter $\alpha$ as we defined in
 Eq. \eqref{time_varying_channel}. Normally, larger $\alpha$ will lead to more serious ICI and generate higher $q$ value
 \cite{10018930}. Now, instead of summing over all $v \in [1, F_{sc}]$ and $v \neq u$ in Eq. \eqref{ICI}, we limit the sum to the above-mentioned window. Specifically, we only focus on 
\vspace{-0.08in}
 \begin{equation}
       v \in [u-q,..,u-1] \cup [u+1,...,u+p].
 \end{equation}
 \vspace{-0.1in}
With this, we can redefine  Eq. \eqref{ICI} as
\begin{equation} \label{ICI1}
\zeta_{u}^d(t) = \frac{1}{N} \sum_{n=1}^{N} \sum_{\substack{v=u-q \\ v \ne u}}^{v=u+q} \sum_{k=1}^{K} \left( H_{kn,uu}^d(t) \right)^* H_{kn,uv}^d(t).
\end{equation}
 
To give a closed-form expression for Eq. \eqref{ICI1}, we first focus on one subcarrier $v$. We have
\begin{equation} \label{ICI_one}
\zeta_{u,v}^d(t) = \frac{1}{N} \sum_{n=1}^{N} \sum_{k=1}^{K} \left( H_{kn,uu}^d(t) \right)^* H_{kn,uv}^d(t),
\end{equation}
Then the total ICI leakage into subcarrier $u$ can be shown as
\vspace{-0.in}
\begin{equation} \label{ICI_all}
\zeta_{u}^d(t) =  \sum_{\substack{v=u-q, v \ne u}}^{v=u+q} \zeta_{u,v}^d(t) .
\end{equation}

Note that the ICI distribution is non-uniform, accounting for the fact that closer subcarriers (e.g., $u-1$ and $u+1$)
leak more ICI into subcarrier 
$u$, while farther ones (e.g., $u-q$ and $u+q$) leak less. We define the ICI leakage from subcarrier $v$ as a decaying function of its distance from another subcarrier $u$ using the following distance-based weight function:
\vspace{-0.1 in}

\begin{small}
\begin{equation}\label{weight}
    w(\Gamma)=\frac{1}{\Gamma^\gamma} 
    \Bigg/
    \left(\sum_{i=1}^q \frac{2}{i^\gamma} \right),
\end{equation}
\end{small}
\vspace{-0.1 in}

Here, $\Gamma=|v-u|$ and $\gamma$ represents the decay factor, which controls how quickly the ICI decays. With Eq. \eqref{weight}, we can assign larger weights to closer subcarriers than to farther subcarriers. 
Then we need to calculate the weighted ICI leakage as
\vspace{-0.1in}
\begin{equation} \label{568}
\begin{aligned}
\mathbb{E} \left[ \left| \zeta_{u}^d(t) \right|^2 \right]
&=\sum_{v=u-q,v \neq u}^{u+q} w(\Gamma) \mathbb{E} \left[ \left| \zeta_{u,v}^d(t) \right|^2 \right]. \\    
\end{aligned}
\end{equation}
\vspace{-0.05in}
And we have the following Lemma \ref{lemma:ici} regarding the upper bound of Eq. \eqref{568}.
\begin{lemma} \label{lemma:ici}
 The ICI term can be finally bounded by 
\vspace{-0.05in}
\begin{equation} \label{ICI13}
\begin{aligned}
   & \mathbb{E} \left[ \left| \zeta^{d}_{u,v}(t) \right|^2 \right] 
    = \mathbb{E} \left[ \zeta^{d}_{u,v}(t) \cdot \left( \zeta^{d}_{u,v}(t) \right)^* \right]\\
    &\resizebox{\linewidth}{!}{$=\!\! \frac{1}{N^2}\!\! \sum_{n=1}^{N} \!\sum_{k=1}^{K} \!\sum_{n'=1}^{N} \!\sum_{k'=1}^{K} \!\!
\mathbb{E} \left[ \left( H^d_{kn,uu} \right)^*\!\! H^d_{kn,uv}  H^d_{k'n',uu}\!\!\left( H^d_{k'n',uv} \right)^* \right]$}\\
&=\frac{K \sigma_h^4}{N} \cdot \sum_{\Gamma=1}^{q} 2 w(\Gamma) \frac{1}{\sqrt{1 + \frac{4 \pi^2 \Gamma^2\sigma_\tau^2}{F_{sc}^2}}}  \frac{1}{1 + \left( \frac{2 \pi \Gamma^2\mu_\tau}{F_{sc}} \right)^2}.
\end{aligned}
\end{equation}
 \vspace{-0.25in}
 \end{lemma}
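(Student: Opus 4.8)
The plan is to retrace the argument used for Lemma~\ref{lemma:interference}, but now carrying along the subcarrier separation $\Gamma=|v-u|$ and the decay weights. Starting from the quadruple-sum expansion of $\mathbb{E}\big[|\zeta_{u,v}^d(t)|^2\big]$ displayed in the statement, the first move is to discard the off-diagonal cross terms. Since the taps $h_{kn,\ell}^d(t)$ are zero-mean, circularly symmetric, and independent across devices $k$, receive antennas $n$, and paths $\ell$, any summand with $(k,n)\neq(k',n')$ factors as $\mathbb{E}\big[(H_{kn,uu}^d)^*H_{kn,uv}^d\big]\,\mathbb{E}\big[H_{k'n',uu}^d(H_{k'n',uv}^d)^*\big]$; plugging in the closed form of $H_{k,n,u,v}^d(t)$ below Eq.~\eqref{EXPENDED} and averaging over the Gaussian taps collapses the double path/sample sum to a single sum and leaves a factor $\sum_{i=1}^{F_{sc}}e^{-j2\pi i(u-v)/F_{sc}}$, which vanishes for $u\neq v$ by DFT orthogonality. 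Only the $NK$ diagonal terms $(k,n)=(k',n')$ survive, so
\[
\mathbb{E}\big[|\zeta_{u,v}^d(t)|^2\big]=\frac{K}{N}\,\mathbb{E}\big[|H_{kn,uu}^d(t)|^2\,|H_{kn,uv}^d(t)|^2\big].
\]

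Next I evaluate the fourth-order channel moment. As in Lemma~\ref{lemma:interference}, Parseval's theorem fixes the per-subcarrier variance of the frequency-domain channel, and the Gaussian fourth-moment (Isserlis/Wick) identity applied to the jointly Gaussian pair $(H_{kn,uu}^d,H_{kn,uv}^d)$ splits the moment into a product-of-variances piece plus a piece proportional to $\big|\mathbb{E}[(H_{kn,uu}^d)^*H_{kn,uv}^d]\big|^2$. Conditioning on the delays, this cross-correlation picks up a phase proportional to the combined delay $\tilde{\tau}_{kn,\ell}^d(t)$ scaled by the subcarrier offset, and together with the finite-window leakage of the DFT this is captured, following the OFDM timing-error model of \cite{1599595}, by the $\eta_b$-type term $\text{sinc}^2\!\big(2\pi\Gamma\,\dot{\tau}^d_{k,j,\ell,\ell'}(t)/F_{sc}\big)$, i.e.\ the same quantity as in Lemma~\ref{lemma:interference} but with the timing error amplified by the subcarrier distance $\Gamma$. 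Collecting the Parseval constants then gives $\mathbb{E}\big[|H_{kn,uu}^d|^2|H_{kn,uv}^d|^2\big]=\sigma_h^4\,\mathbb{E}\big[\text{sinc}^2(2\pi\Gamma\dot{\tau}/F_{sc})\big]$.

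For the last step I reuse $\text{sinc}^2(x)\le(1+(\pi x)^2)^{-1}$ together with $\dot{\tau}^d_{k,j,\ell,\ell'}(t)\sim\mathcal{N}(\mu_\tau,\sigma_\tau^2)$ to evaluate $\mathbb{E}\big[(1+(2\pi\Gamma\dot{\tau}/F_{sc})^2)^{-1}\big]$ in closed form, exactly as in the single-subcarrier computation but with $\mu_\tau,\sigma_\tau^2$ replaced by their $\Gamma$-scaled counterparts, which produces the factor $(1+4\pi^2\Gamma^2\sigma_\tau^2/F_{sc}^2)^{-1/2}(1+(2\pi\Gamma^2\mu_\tau/F_{sc})^2)^{-1}$. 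Substituting into the weighted ICI sum of Eq.~\eqref{568}, the two symmetric offsets $v=u\pm\Gamma$ contribute identically (hence the factor $2$), each weighted by the normalized decay weight $w(\Gamma)$ of Eq.~\eqref{weight}, which assembles the claimed bound $\tfrac{K\sigma_h^4}{N}\sum_{\Gamma=1}^{q}2w(\Gamma)(1+4\pi^2\Gamma^2\sigma_\tau^2/F_{sc}^2)^{-1/2}(1+(2\pi\Gamma^2\mu_\tau/F_{sc})^2)^{-1}$.

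The step I expect to be the bottleneck is the fourth-moment evaluation: $H_{kn,uu}^d$ and $H_{kn,uv}^d$ are built from the same taps and the same random delays, so the moment does not factor, and one must separate cleanly the Gaussian averaging over the taps (where Wick's theorem and DFT orthogonality apply) from the averaging over the random delays (where the sinc timing-error approximation and the Gaussian integral enter), then verify that the $\Gamma$-dependence and the Parseval constant land exactly on the stated closed form rather than on a looser inequality. The truncation to the window $|v-u|\le q$ and the bookkeeping of the weights $w(\Gamma)$ are, by contrast, routine.
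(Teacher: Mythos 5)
Your proposal follows essentially the same route as the paper: expand $\mathbb{E}[|\zeta_{u,v}^d(t)|^2]$ into the quadruple sum, kill the $(k,n)\neq(k',n')$ cross terms by independence and zero mean of the taps, reduce the $NK$ diagonal terms to a fourth-order channel moment expressed via the $\mathrm{sinc}^2$ timing-error model of \cite{1599595}, bound it with $\mathrm{sinc}^2(x)\le(1+(\pi x)^2)^{-1}$ and the Gaussian delay statistics, and assemble the weighted window sum over $\Gamma$. You supply more detail than the paper does (Wick's theorem for the fourth moment, DFT orthogonality for the cross terms), but the decomposition and key steps coincide, so this is the same proof.
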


 \begin{proof}
First, $\mathbb{E} \left[ \left| \zeta_{u,v}^d(t) \right|^2 \right]$ can be derived as 
\vspace{-0.1in}
\begin{equation} \label{ICI12}
\begin{aligned}
   & \mathbb{E} \left[ \left| \zeta^{d}_{u,v}(t) \right|^2 \right] 
    = \mathbb{E} \left[ \zeta^{d}_{u,v}(t) \cdot \left( \zeta^{d}_{u,v}(t) \right)^* \right]\\
    &\resizebox{\linewidth}{!}{$= \frac{1}{N^2} \sum_{n=1}^{N} \sum_{k=1}^{K} \sum_{n'=1}^{N} \sum_{k'=1}^{K} \mathbb{E} \left[ \left( H^d_{kn,uu} \right)^* H^d_{kn,uv}  H^d_{k'n',uu}\left( H^d_{k'n',uv} \right)^* \right].$}
\end{aligned}
\end{equation}

And Eq. \eqref{ICI12} can be divided into two cases:
\addtolength{\topmargin}{0.012in}
\begin{itemize}
    \item When $(k,n) \neq (k^{'},n^{'})$, we have 
    \vspace{-0.1in}
    \begin{equation} 
        \mathbb{E} \left[ \left( H^d_{kn,uu} \right)^* H^d_{kn,uv}  H^d_{k'n',uu}\left( H^d_{k'n',uv} \right)^* \right]=0,
    \end{equation}
    since we assume that all the channel coefficients are independent complex Gaussian random variables.

    \item When $(k,n) = (k^{'},n^{'})$, we have \vspace{-0.1 in}

      \begin{equation} 
      \begin{aligned}
          \mathbb{E} \left[ \left( H^d_{kn,uu} \right)^* H^d_{kn,uv}  H^d_{k'n',uu}\left( H^d_{k'n',uv} \right)^* \right]\\
        = \sigma_h^4\mathbb{E} \left[ \text{sinc}^2\left( \frac{2\pi (u-v)\tilde{\tau}_{kn,l}^{d}(t)}{F_{sc}} \right) \right], \end{aligned} \end{equation}

    \vspace{-0.1 in}
        
        using the assumption that $h_{kn,l}^{d}(t)\sim \mathcal{CN}(0, \sigma_h^2)$ and all channel coefficients are independent.
\end{itemize}
Based on the above two cases, we can derive the upper bound of the ICI term as shown in Lemma \ref{lemma:ici}.
\end{proof}
 \vspace{-0.1in}
\subsubsection{\textbf{Noise term}} \label{sssec:error_d}
Finally, for the noise term given in part (d) of Eq. \eqref{error0}, we have 
\vspace{-0.15in}

 \begin{small}
\begin{equation} \label{eqn:error_d}
     \mathbb{E} \left[ \left|  \frac{1}{N}
\sum_{n=1}^N
\sum_{k=1}^K
\bigl(H^d_{k,n,uu}(t)\bigr)^*
\,Z^d_{n,u}(t) \right| ^2 \right]
=\frac{K\sigma_{h}^2 \sigma_{z}^2}{N}.
\end{equation}
\end{small}

We simplified each term in Eq. \eqref{error0} in Secs. \ref{sssec:error_b}, \ref{sssec:error_c} and \ref{sssec:error_d}. Then we have the following theorem:

\begin{theorem} \label{thm:theorem}
The overall error $e(t)$ for the $t$-th global model update can be bounded as
\vspace{-0.17in}

\begin{small}
\begin{equation}
\begin{aligned}\label{error}
    & e(t)=\frac{K - 1}{N}  L^2 \sigma_h^4 \frac{1}{\sqrt{1 + \frac{4 \pi^2 \sigma_\tau^2}{F_{sc}^2}}}  \frac{1}{1 + \left( \frac{2 \pi \mu_\tau}{F_{sc}} \right)^2} \\
    &\resizebox{\linewidth}{!}{$+\frac{K \sigma_h^4}{N} \!\!\sum_{\Gamma=1}^{q} 2 w(\Gamma) \frac{1}{\sqrt{1 + \frac{4 \pi^2 \Gamma^2\sigma_\tau^2}{F_{sc}^2}}}  \frac{1}{1 + \left( \frac{2 \pi \Gamma^2\mu_\tau}{F_{sc}} \right)^2} \!+\!\frac{K\sigma_{h}^2 \sigma_{z}^2}{N}.$}
\end{aligned}
\end{equation}
\end{small}
\end{theorem}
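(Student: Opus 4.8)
The plan is to regard the per-round error $e(t)$ as the total expected power of the aggregate distortion corrupting the combined estimate $\hat{Y}_u^d(t)$, and to show that this power splits additively into the three pieces already quantified in Lemma~\ref{lemma:interference}, Lemma~\ref{lemma:ici} and Eq.~\eqref{eqn:error_d}. I would start from the decomposition in Eq.~\eqref{error0}, which writes $\hat{Y}_u^d(t)$ as the desired aggregated parameter (term $(a)$) plus the same-subcarrier interference $(b)$, the ICI term $(c)$, and the noise term $(d)$; the distortion is then $(b)+(c)+(d)$, and we take $e(t)=\mathbb{E}\big[|(b)+(c)+(d)|^2\big]$.

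The first step is to expand this square and argue that every cross-term vanishes. The cross-terms involving $(d)$ vanish immediately, since $Z_{n,u}^d(t)$ is zero-mean and independent of all channel gains, so $\mathbb{E}[(b)(d)^*]=\mathbb{E}[(c)(d)^*]=0$. For $\mathbb{E}[(b)(c)^*]$ I would expand the effective frequency-domain channels $H^d_{k,n,u,v}(t)$ into the underlying i.i.d.\ zero-mean complex Gaussian path gains $h^d_{kn,l}(t)$ of Eq.~\eqref{Chnnel} and apply the Isserlis/Wick rule: a nonzero contribution would require the conjugated and non-conjugated gains to pair off with matching user, antenna and path indices, but $(b)$ already forces two \emph{distinct} user indices $k\ne k'$ while $(c)^*$ supplies only one additional user index, so no consistent pairing exists and $\mathbb{E}[(b)(c)^*]=0$. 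This yields the additive split $e(t)=\mathbb{E}[|(b)|^2]+\mathbb{E}[|(c)|^2]+\mathbb{E}[|(d)|^2]$.

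What remains is bookkeeping. I would identify $\mathbb{E}[|(b)|^2]$ with $\mathbb{E}\big[|\kappa^d_{j,u}(t)|^2\big]$ (the local-model entries appearing in $(b)$ being bounded deterministic weights that are normalized away) and invoke Lemma~\ref{lemma:interference} for the first term of \eqref{error}; identify $\mathbb{E}[|(c)|^2]$ with the distance-weighted ICI power $\sum_{\Gamma=1}^{q} 2 w(\Gamma)\,\mathbb{E}\big[|\zeta^d_{u,v}(t)|^2\big]$ of Eq.~\eqref{568} and invoke Lemma~\ref{lemma:ici} for the second term; and invoke Eq.~\eqref{eqn:error_d} directly for $\mathbb{E}[|(d)|^2]=K\sigma_h^2\sigma_z^2/N$. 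Summing the three bounds reproduces \eqref{error}.

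I expect the main obstacle to be exactly the cross-term computation: one must verify carefully that the quartic channel monomials arising in $\mathbb{E}[(b)(c)^*]$ never collapse, under the double summations over antennas and users, into a product of conjugate-matched pairs, so that the independence and zero-mean properties of the $h^d_{kn,l}(t)$ force the expectation to zero; the delays $\tilde\tau^d_{kn,l}(t)$, being independent of the Rayleigh magnitudes, do not interfere with this. A secondary subtlety is that \eqref{error} suppresses the dependence on the local-model magnitudes $\theta^d_{k,u}(t)$, so $e(t)$ should be read as distortion power per unit parameter energy (equivalently, under the normalization $|\theta^d_{k,u}(t)|\le 1$).
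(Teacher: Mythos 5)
Your proposal is correct and follows essentially the same route as the paper: the paper's proof of Theorem~\ref{thm:theorem} is simply to sum the three bounds from Lemma~\ref{lemma:interference}, Lemma~\ref{lemma:ici}, and Eq.~\eqref{eqn:error_d}. The only difference is that you explicitly justify the additive split by showing the cross-terms among $(b)$, $(c)$, $(d)$ vanish (via zero-mean noise and the Wick pairing argument for the circularly symmetric channel gains), a step the paper leaves implicit.
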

\vspace{-0.15in}
\begin{proof}
    Follows from Lemmas 1\&2 and Eq. \eqref{eqn:error_d}.
\end{proof}


\subsection{Accumulated Error}\label{ssec:errormulti}
In Theorem \ref{thm:theorem}, we derived the distortions caused by OTA computation during the $t$-th round of global model aggregation. Specifically, the $t$-th round global model aggregation at the central PS can be expressed as
    \begin{equation}
        \boldsymbol{\theta}(t)=\frac{1}{K}\sum_{k=1}^K \boldsymbol{\theta}_k(t) + \boldsymbol{\epsilon}(t).
    \end{equation}

Here, $\boldsymbol{\theta}(t)$ denotes the global model generated at the PS in the $t$-th round, $\boldsymbol{\theta}_k$ is the local model at device $k$, and $\boldsymbol{\epsilon}(t)$ is a Gaussian random vector that represents the distortion caused by the misalignment and the time-varying channels, satisfying $e(t)=\mathbb{E}[
\| \boldsymbol{\epsilon}(t) \|^2]$ from Theorem \ref{thm:theorem}. Then, the PS broadcasts the
updated global model back to all local devices to facilitate the next round of updates. The $(t+1)$-th local model update at local device $k$ can be shown as
    \begin{equation} \label{local_model_update}
        \boldsymbol{\theta}_{k}(t+1)= \boldsymbol{\theta}(t) - \beta \nabla F(\boldsymbol{\theta}(t)).
    \end{equation}

In Eq. \eqref{local_model_update}, $\beta$ represents the learning rate and $ \nabla F(\boldsymbol{\theta}(t))$ denotes the local gradient. 
Then, the $(t+1)$-th global model aggregated at the central PS can be expressed as
\vspace{-0.22 in}

\begin{small}
    \begin{align}
        \boldsymbol{\theta}(t+1)&= \frac{1}{K}\sum_{k=1}^K \boldsymbol{\theta}_{k}(t+1) + \boldsymbol{\epsilon}(t+1)   \\
       &= \boldsymbol{\theta}(t) -\beta \boldsymbol{g}(t)+\beta\nabla F(\boldsymbol{\epsilon}(t))+\boldsymbol{\epsilon}(t)+\boldsymbol{\epsilon}(t+1).
    \end{align}
\end{small}
\vspace{-0.2 in}

Here, $\boldsymbol{g}(t)$ is the true gradient without errors, i.e., $\nabla{F}(\boldsymbol{\theta}(t)) = \boldsymbol{g}(t) + \nabla{F}(\boldsymbol{\epsilon}(t))$. Based on the above analyses, we have the following recursive model for the final global model $\boldsymbol{\theta}(T)$: 
\vspace{-0.2in}

\begin{small}
\begin{equation} \label{accumulated_error}
    \boldsymbol{\theta}(T)\!=\! \boldsymbol{\theta}(0) -\beta \sum_{t=0}^{T-1}\boldsymbol{g}(t)+\sum_{t=0}^{T-1}\boldsymbol{\epsilon}(t)+\sum_{t=0}^{T-1}\beta\nabla F(\boldsymbol{\epsilon}(t))+\boldsymbol{\epsilon}(T).
\end{equation}
\end{small}
\vspace{-0.1in}

 Here, $\boldsymbol{\theta}(0)$ is the initial model and $\boldsymbol{\theta}(0)-\sum_{t=0}^{T-1}\boldsymbol{g}_t$ captures the ideal FL model after $T$ global rounds. Furthermore,
$\sum_{t=0}^{T-1} \boldsymbol{\epsilon}(t)$ is the accumulated error of all $T$ global rounds, $\sum_{t=0}^{T-1}\beta\nabla F(\boldsymbol{\epsilon}(t))$ is the accumulated gradient distortion across $T$ global rounds, and $\boldsymbol{\epsilon}(T)$ is the error for the final aggregation round.

For the accumulated error term $\sum_{t=0}^{T-1}\boldsymbol{\epsilon}(t)$, we have
\begin{equation}
    \mathbb{E}\left[ \left| \left|\sum_{t=0}^{T-1}\boldsymbol{\epsilon}(t)\right|\right|^2 \right] =\sum_{t=0}^{T-1}\mathbb{E}\left[ \left| \left|\boldsymbol{\epsilon}(t)\right|\right|^2 \right]=Te(t),
\end{equation}

For the accumulated gradient distortion term $\sum_{t=0}^{T-1}\beta\nabla F(\boldsymbol{\epsilon}(t))$, we have
\begin{equation}\label{eta_11}
    \sum_{t=0}^{T-1}\beta\nabla F(\boldsymbol{\epsilon}(t))\leq \beta\eta \left| \left|\boldsymbol{\epsilon}(t)\right|\right|^2= \beta\eta e(t).
\end{equation}

Here, $\eta$ represents the degree to which the gradient-based model update is nonlinearly distorted by the aggregation error $\boldsymbol{\epsilon}(t)$. Its value is related to the Lipschitz continuity property of the gradient of the loss function \cite{zehtabi2024decentralized}.

Combining all the above analyses, we have the accumulated error between the perfect OTA-FL results and the imperfect results with misalignment and time varying channels as:
\vspace{-0.1in}

\begin{small}
\begin{equation} \label{Final_error}
    \mathbb{E} \left[ 
    \left| \left|
    \sum_{t=0}^{T-1}\boldsymbol{\epsilon}(t)\!+\!\!\!\sum_{t=0}^{T-1}\beta\nabla F(\boldsymbol{\epsilon}(t))\!+\!\boldsymbol{\epsilon}(T)
    \right| \right|^2
    \right]\!\! \leq
    (T\!+\!1\!+\!\beta\eta)e(t).
\end{equation}
\end{small}
\vspace{-0.1in}


\vspace{-0.03in}
\section{Simulation Results} \label{sec:simulations}

In this section, we provide simulation results to verify the theoretical error bound we derived in Eq. \eqref{Final_error}. We assume that there are $K=10$ local devices, each with a single antenna. The number of receiving antennas at the central PS is $N=5$. The channel coefficient variance is set to $\sigma_h=0.2$ and the noise variance to $\sigma_z=0.1$. For the phase parameters introduced by the misalignment delay and
time-varying channels, we set them as $\sigma_{\tau}=0.01$ and $\mu_{\tau}=0.1$. The number of subcarriers used in the simulation is $F_{sc}=64$, and the ICI window is $q=2$. We define the MSE loss function as $L(\boldsymbol{w}) = \frac{1}{n} \sum_{i=1}^{n} (y_i - \boldsymbol{x}_i^T \boldsymbol{w})^2$ with $\boldsymbol{w}$ being the model parameter vector. We assume that each data point $\boldsymbol{x}_i$ is drawn independently from a standard Gaussian distribution, i.e., $\boldsymbol{x}_i\backsim \mathcal{N}(0,\mathbf{I}_d)$, and $y_i$ is generated by adding additive Gaussian noise that is independent of $\boldsymbol{x}_i$.
Based on this, the gradient can be expressed as $\nabla L(\mathbf{w}) = \frac{2}{n} \mathbf{X}^T (\mathbf{X}\mathbf{w} - \mathbf{y})$. Each local device utilizes this formulation to compute its local model parameters. These parameters are then transmitted to the central PS over a delay- and time-varying wireless channel, as described in Eq.~\eqref{time_varying_channel}. Upon reception, the central PS aggregates the received local models and broadcasts the updated global model back to all local devices to facilitate the subsequent round of model updates \cite{setting1}.

\begin{figure}[t] 
\centering
\includegraphics[width=5.8cm]{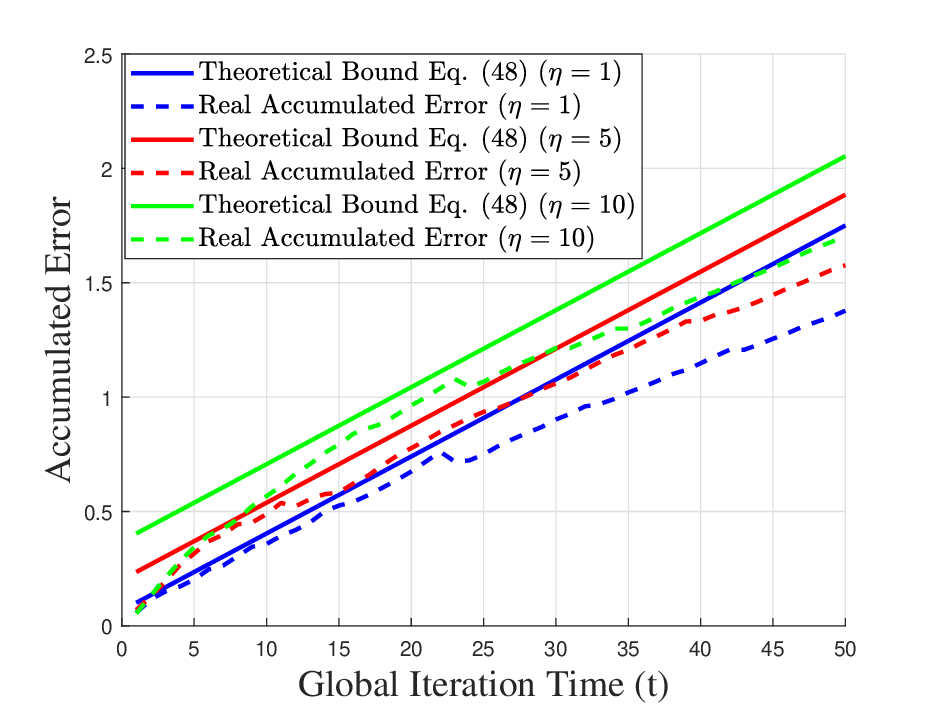}
\caption{Comparison between the derived analytic error bound in Eq. \eqref{Final_error} and the real accumulated error through multi-round OTA-FL with different $\eta$ values.}
\label{eta_value}
\end{figure}

\begin{figure}[t!]
\centering
\includegraphics[width=5.8cm]{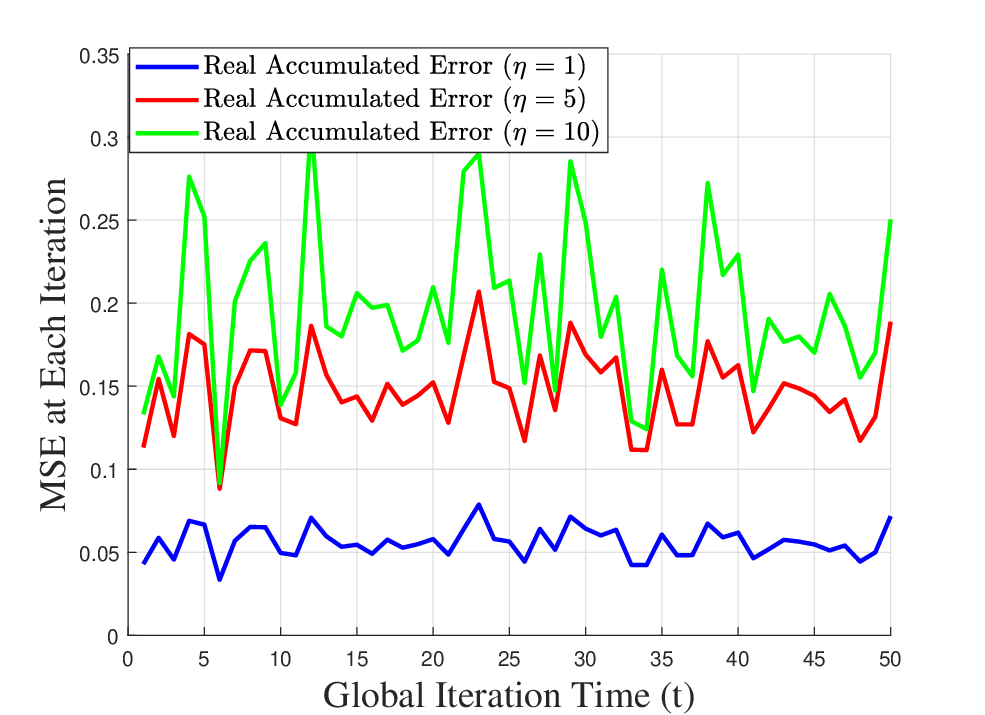}
\caption{The actual error of each global round for practical OTA-FL with different $\eta$ values.}
\label{Simulation_2}
\end{figure}

We first compare the derived analytic error bound in Eq. \eqref{Final_error} and the real accumulated error through multi-round OTA-FL with different $\eta$ values. Specifically, the $\eta$ parameter we introduced in Eq. \eqref{eta_11} represents the nonlinear gradient distortion introduced by the aggregation errors.  For a $L_{Lip}$-smooth loss-function $F(\cdot)$, $\eta$ should satisfy $\eta \leq L_{Lip}^2$. Here, we set $\eta=1, 5, 10$. 
In Fig. \ref{eta_value}, we see that the theoretical error bound we derived in Eq. \eqref{Final_error} and the actual accumulated OTA-FL errors have the same growth trend. The gap between the theoretical error bound and the practical error bound is relatively small, which verifies that the theoretical error we obtained can be well utilized to model the actual error. Another notable observation is that as the global rounds progress, the gap between the theoretical bound and the practical error becomes larger. This is because when deriving the theoretical bound, we always use the worst case scenario to serve as upper bound. Whereas in practical OTA-FL, as shown in Fig. \ref{Simulation_2}, the errors follow a more realistic stochastic process as some errors from different rounds may cancel each other out.

\begin{figure}[t!]
\centering
\includegraphics[width=5.8cm]{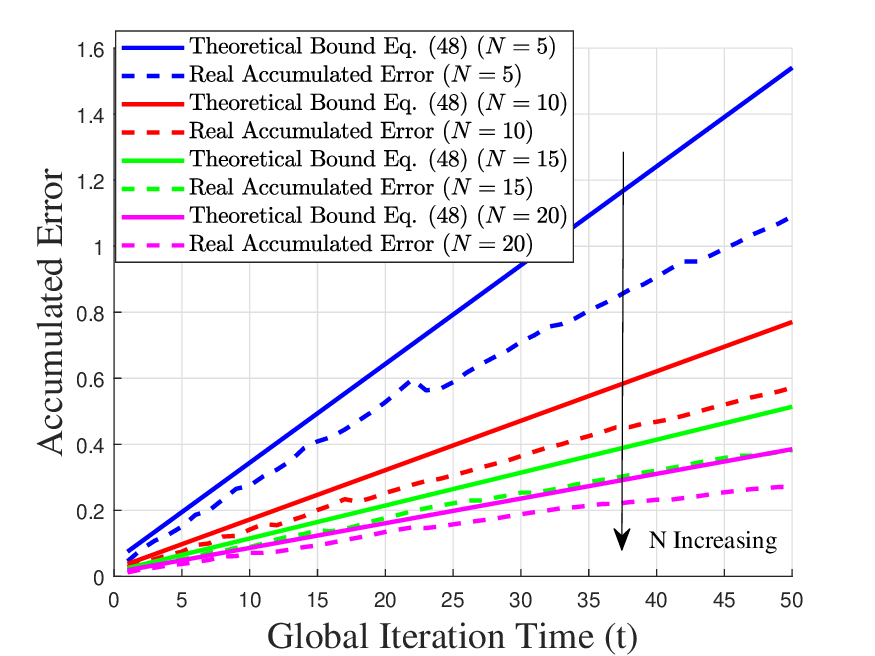}
\caption{Comparison between the derived analytic error bound and the real accumulated error through multi-round OTA-FL under different numbers of receive antennas $N$ at the central PS.}
\label{Simulation_3}
\end{figure}

In Fig. \ref{Simulation_3}, we simulate the comparisons between the derived analytic error bound and the real accumulated error through multi-round OTA-FL with different numbers of receiving antennas ($N$) at the central PS. Here, $\eta$ is set to $5$. We can see that as the number of $N$ increases, the accumulated error decreases, which is consistent with our theoretical analysis and validates the accuracy of our theoretical analysis.

\vspace{-0.04in}
\section{Conclusion} \label{sec:conclusion}

In this paper, we derived a closed-form expression for a single-round global model update, then derived the accumulated error across learning rounds. We also provided numerical simulations to
confirm the accuracy of our derived analysis.

\bibliographystyle{IEEEtran}
\footnotesize
\bibliography{ref}
\end{document}